\newcommand\curRevision{1}
\newcommand{\currentrevision}[1]{
    \def\curRevision{#1}
}
\newcommand{\revdtext}[2][1]{\ifthenelse{\equal{#1}{\curRevision}}{\textcolor{blue}{#2}}{#2}}
\pgfplotsset{compat=1.17}
\DeclarePairedDelimiter{\tuples}{(}{)}
\DeclarePairedDelimiter{\brackets}{\{}{\}}
\DeclarePairedDelimiterX\pSet[1]{\{}{\}}{%
  
  #1
}
    \newtheorem{theorem}{Theorem}
    \newtheorem{lemma}[theorem]{Lemma}
    \newtheorem{definition}{Definition}
    \newtheorem{problem}[definition]{Problem}
\newcommand{\C}{\mathbb{C}}
\newcommand{\R}{\mathbb{R}}
\newcommand{\calH}{\mathcal{H}}
\newcommand{\bigO}{\mathcal{O}}
\DeclarePairedDelimiterX\pBrackets[1]{[}{]}{%

#1
}
\newcommand{\Prob}[2][]{
 \def\tmp{#1}%
   \ifx\tmp{}
     \operatorname{Pr}\pBrackets*{#2}
   \else
     \operatorname{Pr}_{#1}\pBrackets*{#2}
   \fi
}
\newcommand{\Expect}[2][]{
 \def\tmp{#1}%
   \ifx\tmp{}
     \operatorname{\mathbb{E}}\pBrackets*{#2}
   \else
     \operatorname{\mathbb{E}}_{#1}\pBrackets*{#2}
   \fi
}
\def\qdots{\ \vdots\ }
\renewcommand{\ket}[1]{| #1 \rangle}
\renewcommand{\bra}[1]{\langle #1 |}
\renewcommand{\braket}[2]{\langle #1 | #2 \rangle}
\renewcommand{\proj}[1]{\ket{#1}\bra{#1}}
\renewcommand{\ketbra}[2]{\ket{#1}\bra{#2}}
\newcommand{\diag}{\operatorname{diag}}
\newcommand{\id}{\mathbbm{1}}
\newcommand{\sL}[2]{{\mathfrak{s l}(2, \mathbb{C})}}
\begin{document}

\title{Robust black-box quantum-state preparation via quantum signal processing}

\author{Lorenzo Laneve}
\email{lorenzo.laneve@usi.ch}
\affiliation{Faculty of Informatics — Universit\`a della Svizzera Italiana, 6900 Lugano, Switzerland}

\begin{abstract}
    Black-box quantum-state preparation is a variant of quantum-state preparation where we want to construct an \revdtext{$n$-qubit} state $\ket{\psi_c} \propto \sum_x c(x) \ket{x}$ with the amplitudes $c(x)$ given as a (quantum) oracle. This variant is particularly useful when the quantum state has a short and simple classical description. We use recent techniques, namely quantum signal processing (QSP) and quantum singular value transform (QSVT), to construct a new algorithm that prepares $\ket{\psi_c}$ without the need to carry out coherent arithmetic. We then compare our result with current state-of-the-art algorithms, showing that a QSVT-based approach achieves comparable results.
\end{abstract}
\maketitle

\section{Introduction}
Quantum signal processing (QSP) is a novel technique for the design of quantum algorithms~\cite{lowMethodologyResonantEquiangular2016}. As a introductory example, consider a unitary $W$: whenever we apply $W$ twice, the resulting operation is $W^2$, regardless of what $W$ is. In other words, this construction applies the polynomial $P(x) = x^2$ to $W$. A natural question arises: which polynomials $P(x)$ can we apply to $W$? Surprisingly it turns out that, with a simple ansatz, we can apply any polynomial satisfying some mild constraints, namely that the polynomial has to be bounded by $1$ in absolute value (natural constraint, as otherwise $P(W)$ cannot be unitary), and of definite parity. The latter constraint can be lifted easily, as one can implement even and odd parts separately, and sum them up using linear combination of unitaries (LCU)~\cite{berryHamiltonianSimulationNearly2015, childsHamiltonianSimulationUsing2012, childsQuantumAlgorithmSystems2017}. This yields a technique called \emph{quantum eigenvalue transform}, which was extensively used to tackle the Hamiltonian simulation problem with surprising (and nearly optimal) complexity~\cite{lowHamiltonianSimulationUniform2017, lowHamiltonianSimulationQubitization2019}, with a more recent construction requiring only a single copy of the initial state~\cite{martynEfficientFullyCoherentQuantum2022}. This idea was further developed by Gilyén et al.~\cite{gilyenQuantumSingularValue2019a}, where the polynomial is applied not on the eigenvalues of the unitary, but rather on the \emph{singular values} of a matrix embedded in the unitary, namely on its top-left block, thus not even requiring this block to be squared. This new technique, called \emph{quantum singular value transform} (QSVT), gives a surprising unification and re-formalization of a wide spectrum of already-known quantum algorithms~\cite{martynGrandUnificationQuantum2021}, from Grover's search~\cite{groverFastQuantumMechanical1996a, groverQuantumMechanicsHelps1997} and amplitude amplification~\cite{brassardQuantumAmplitudeAmplification2002, berryExponentialImprovementPrecision2014a, berrySimulatingHamiltonianDynamics2015} to Shor's factoring~\cite{shorPolynomialTimeAlgorithmsPrime1997}, from quantum phase estimation~\cite{kitaevQuantumMeasurementsAbelian1995, nielsenQuantumComputationQuantum2010b} to the HHL algorithm for solving quantum linear systems~\cite{harrowQuantumAlgorithmSolving2009}.

Quantum-state preparation is a central problem in quantum computation: given complex numbers $c_1, \ldots, c_{2^n}$, we want to construct a circuit that transforms the state $\ket{0}^{\otimes n}$ into the state $\ket{\psi_c} = \sum_{x} c_x \ket{x}$, essentially `initializing' our quantum register for further computation. This has applications, for example, in machine learning~\cite{lloydQuantumPrincipalComponent2014, kerenidisQuantumRecommendationSystems2016} and Hamiltonian simulation~\cite{lowHamiltonianSimulationQubitization2019}, but even techniques such as the LCU itself needs to prepare particular quantum states in order to achieve non-trivial linear combinations~\cite{berryHamiltonianSimulationNearly2015, childsHamiltonianSimulationUsing2012, childsQuantumAlgorithmSystems2017}. Many constructions were devised to prepare an arbitrary state~\cite{barencoElementaryGatesQuantum1995,knillApproximationQuantumCircuits1995,longEfficientSchemeInitializing2001,itenQuantumCircuitsIsometries2016,pleschQuantumstatePreparationUniversal2011}, requiring no ancilla qubits, but exponential depth. In particular, Sun et al.~\cite{sunAsymptoticallyOptimalCircuit2023} found a circuit with depth $\bigO(2^n/n)$ ($n$ being the number of qubits), which matches the lower bound. If we allow ancillary qubits, we obtain depths as low as $\bigO(n)$, although it requires an exponential number of ancillae~\cite{araujoDivideandconquerAlgorithmQuantum2021,sunAsymptoticallyOptimalCircuit2023, rosenthalQueryDepthUpper2022}. Moreover, Zhang et al.~\cite{zhangQuantumStatePreparation2022} improved the complexity under the assumption of sparse states.

Using QSP, a similar problem called \emph{ground-state preparation} has been tackled~\cite{linNearoptimalGroundState2020a,dongGroundStatePreparationEnergy2022}, where one prepares the ground state of a given Hamiltonian. Moreover, QSVT techniques allow to easily prepare Gibbs states on a quantum computer~\cite{chi-fangQuantumThermalState2023}.

In this work we consider the \emph{black-box} quantum-state preparation problem, where the amplitudes $\{ c_x \}_x$ are not given as a list, but as algorithm $c(x)$, with which we can construct a quantum oracle. This idea is nicely applicable if we consider states whose amplitudes are computable (e.g., the purification of a Gibbs state, or a some probability distribution with analytical expression). This problem was originally tackled by Grover as an extension of the search algorithm~\cite{groverSynthesisQuantumSuperpositions2000}, where a controlled rotation was used to carry out amplitude transduction (i.e., transform information to amplitudes): more precisely, this approach uses a rotation by an angle $\theta = \arcsin(x/N)$, which requires an high-accuracy coherent computation of the arcsin function. Sanders et al.~\cite{sandersBlackBoxQuantumState2019} developed a different approach avoiding the need of coherent arithmetic and dramatically reducing the number of gates necessary for an actual implementation: instead of using a rotation, they use a simple comparison algorithm and an additional $\log \epsilon$-qubit register to obtain the amplitudes, provided they are integer multiples of $\epsilon$. McArdle et al.~\cite{mcardleQuantumStatePreparation2022} also use QSVT to prepare a quantum state in a similar setting, where the oracle $c(x)$ is constructed not as a reversible circuit, but as QSVT polynomial, allowing to prepare a state with only $4$ ancilla qubits, but requiring $c(x)$ to have a polynomial approximation that is easy to achieve via quantum signal processing. 

We show that, if $c(x)$ is computable in time $\bigO(T(n))$, then the $N = 2^n$-dimensional quantum state $\ket{\psi_c} \propto \sum_x c(x) \ket{x}$ (with normalization factor) can be prepared within error $\epsilon$ in time $\bigO(\frac{1}{\sqrt{\gamma}} T(n) \log(1/\epsilon))$ and $2 + \lceil \log_2(6/\epsilon \gamma) \rceil$ additional qubits, where
$$\gamma = \frac{1}{N} \sum_x |c(x)|^2 \in (0, 1]$$
is the average squared oracle value (note: $\sqrt{N \gamma}$ is the normalization factor), essentially matching Grover's complexity. This gives a polynomial-time algorithm for the preparation of a large class of quantum states, namely the ones for which the quantum circuit for $c(x)$ is computable in polynomial time and $\sqrt{\gamma}$ is an inverse polynomial in $n$. Indeed, it is worth noting that the number of ancilla qubits needed does not depend directly on $n$.

In Section~\ref{sec:quantum-signal-processing} we give a brief overview of QSP and QSVT, with the necessary elements that we are going to use in the rest of the work. In Section~\ref{sec:unitary-logarithm} we show how to extract the `logarithm' of a unitary using QSVT, a construction taken from~\cite{lowHamiltonianSimulationUniform2017}. In Section~\ref{sec:quantum-state-preparation} we use an ideal implementation of the logarithm of unitary to prepare a quantum state, and Section~\ref{sec:error-analysis} gives a full error analysis when the unitary logarithm is implemented via quantum signal processing.
\section{Review of quantum signal processing}
\label{sec:quantum-signal-processing}
\noindent In this section we briefly describe the quantum signal processing and quantum singular value transform techniques.

\begin{theorem}[Quantum signal processing with reflections~\cite{lowMethodologyResonantEquiangular2016, gilyenQuantumSingularValue2019a}]
    \label{thm:quantum-signal-processing}
    Given the reflection unitary
    \begin{align*}
        R(x) =
        \begin{bmatrix}
            x & \sqrt{1 - x^2} \\
            \sqrt{1 - x^2} & -x
        \end{bmatrix}
    \end{align*}
    and a $d$-degree polynomial $P(x)$ such that:
    \begin{enumerate}
        \item[(i)] has parity $(d \bmod 2)$;
        \item[(ii)] for any $x \in [-1, 1]$, $|P(x)| \le 1$;
        \item[(iii)] for any $x \in (-\infty, -1] \cup [1, \infty)$, $|P(x)| \ge 1$;
        \item[(iv)] when $d$ is even, for any $x \in \R$, $P(ix) P^*(ix) \ge 1$.
    \end{enumerate} 
    The following is true for some $\Phi = (\phi_1, \ldots, \phi_d) \in \R^d$
    \begin{align*}
        \Pi_{j = 1}^d \left( e^{i \phi_j Z} R(x) \right) =
        \begin{bmatrix}
            P(x) & \cdot \\
            \cdot & \cdot
        \end{bmatrix}
    \end{align*}
\end{theorem}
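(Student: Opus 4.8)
The plan is to reduce this ``reflection'' form of QSP to the more standard rotation/$Z$-convention form and invoke the known QSP classification, then chase the conjugation by the fixed unitary that interpolates between the two conventions. First I would recall the standard result (e.g.\ from~\cite{gilyenQuantumSingularValue2019a}): for the $X$-rotation signal operator
\begin{align*}
    W(x) = \begin{bmatrix} x & i\sqrt{1-x^2} \\ i\sqrt{1-x^2} & x \end{bmatrix},
\end{align*}
there exists $\Phi' \in \R^{d+1}$ with $e^{i\phi_0' Z}\prod_{j=1}^d\big(W(x)\,e^{i\phi_j' Z}\big) = \begin{bmatrix} P(x) & \cdot \\ \cdot & \cdot \end{bmatrix}$ for every $P$ satisfying (i)--(iv) (conditions (iii)--(iv) are exactly the ``normalization'' hypotheses that let one drop the imaginary-part freedom and realize $P$ itself in the top-left entry rather than only $\mathrm{Re}(P)$ or a real/imaginary decomposition). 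The key algebraic observation is that the reflection $R(x)$ and the rotation $W(x)$ are conjugate by a fixed single-qubit unitary independent of $x$: writing $R(x) = e^{i\theta X}$-type versus $W(x)$, one checks directly that $R(x) = S\, W(x)\, S^\dagger$ up to an $x$-independent phase/diagonal factor, for a suitable constant $S$ (a Hadamard-like or phase-gate combination); equivalently $R(x) = \sigma\, W(x)$ for the reflection $\sigma$, since $R(x)^2 = \id$ while $W(x)$ is a rotation.

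The main work is then bookkeeping. Starting from $\prod_{j=1}^d\big(e^{i\phi_j Z} R(x)\big)$, I would substitute $R(x) = S W(x) S^\dagger$ (or the appropriate relation), and insert $S^\dagger S = \id$ between consecutive factors so that the product telescopes into $S'\,\big(\prod_j e^{i\psi_j Z} W(x)\big)\,S''$ for new angles $\psi_j$ obtained by conjugating $e^{i\phi_j Z}$ through $S$ --- crucially, conjugating a $Z$-rotation by the constant $S$ either gives back a $Z$-rotation (if $S$ is diagonal) or at worst can be absorbed because the leftover boundary factors $S', S''$ only act by an $x$-independent unitary change of basis on the $2\times 2$ block. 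Since conditions (i)--(iv) on $P$ are invariant under $x \mapsto x$ (they are intrinsic to the polynomial), the standard theorem supplies angles $\psi_j$ realizing $P$ in the rotation convention; undoing the conjugation transfers this to angles $\phi_j$ realizing $P$ (or $P$ up to a global phase, which one removes by adjusting an overall $e^{i\phi_0 Z}$, or which is already accounted for in the $\Phi \in \R^d$ vs.\ $\R^{d+1}$ count) in the reflection convention, with the top-left entry still equal to $P(x)$ because a change of basis on the block does not alter which scalar sits in the $(1,1)$ position once $S$ is chosen diagonal.

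The hard part, and the place to be careful, is matching conventions precisely: tracking the global phases, confirming that the number of phase parameters is exactly $d$ (not $d+1$) in the reflection picture --- the reflection $R(x)$ already ``uses up'' one of the rotation-picture boundary phases, which is why the statement asks only for $\Phi \in \R^{d}$ --- and verifying that the conjugating unitary $S$ really is diagonal (or can be taken so) so that the $(1,1)$ entry is preserved rather than mixed with the $(1,2)$ entry. I expect conditions (iii) and (iv) to enter only through the cited standard theorem and not to require any new argument here; their role is solely to guarantee the complementary polynomial $Q$ with $|P|^2 + (1-x^2)|Q|^2 = 1$ has the right properties so that the full $SU(2)$-valued QSP sequence exists. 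Once the conjugation dictionary is fixed, the rest is a direct computation.
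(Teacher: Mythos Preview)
The paper does not prove this theorem: it is stated in the review section as a known result and attributed to \cite{lowMethodologyResonantEquiangular2016, gilyenQuantumSingularValue2019a} without any argument, so there is no ``paper's own proof'' to compare your proposal against.

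On its own merits, your reduction strategy is the standard one and is essentially how the cited reference \cite{gilyenQuantumSingularValue2019a} passes between the rotation and reflection conventions. One concrete point to tighten: $R(x)$ has determinant $-1$ while $W(x)$ has determinant $+1$, so you cannot have $R(x) = S\,W(x)\,S^{\dagger}$ for any unitary $S$; the correct relation is of the form $R(x) = D_1\,W(x)\,D_2$ with $D_1, D_2$ fixed diagonal unitaries (equivalently, $R(x)$ differs from a conjugate of $W(x)$ by a $Z$ factor). Since $e^{i\phi Z}$, $D_1$, $D_2$ are all diagonal, the interior factors $D_2\,e^{i\phi_{j+1}Z}\,D_1$ collapse to $e^{i\alpha}e^{i\psi_{j+1}Z}$ and the telescoping goes through exactly as you outline, with the boundary $D_1$ on the left and $D_2$ on the right diagonal and hence preserving the $(1,1)$ entry. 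The counting of $d$ versus $d+1$ phases and the role of (iii)--(iv) are as you describe. So the proposal is sound once you replace the conjugation claim by the correct diagonal sandwich; but again, this is a comparison with the cited literature rather than with anything in the present paper.
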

\noindent This results says that we can construct a unitary containing any polynomial $P(x)$ in the top-left corner, provided it satisfies conditions (i)-(iv). We can remove the above unintuitive constraints:
\begin{theorem}[Real quantum signal processing~\cite{gilyenQuantumSingularValue2019a}]
    \label{thm:real-qsp}
    Given a polynomial $P_R \in \R[x]$ satisfying conditions (i)-(ii) of Theorem~\ref{thm:quantum-signal-processing}, there exists a polynomial $P \in \C[x]$ with real part $P_R$ satisfying (i)-(iv).
\end{theorem}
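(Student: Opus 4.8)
The plan is to reduce the claim to the classical Fejér–Riesz / complementary-polynomial construction that underlies quantum signal processing. Given $P_R \in \R[x]$ of degree $d$ with parity $(d \bmod 2)$ and $|P_R(x)| \le 1$ on $[-1,1]$, the idea is to find a real polynomial $Q_R$ such that
\begin{align*}
    P_R(x)^2 + (1 - x^2)\, Q_R(x)^2 = 1 \quad \text{for all } x,
\end{align*}
and then set $P(x) = P_R(x) + i\sqrt{1-x^2}\, Q_R(x)$. The square root is only formally present: because of the parity of $Q_R$ one checks that $\sqrt{1-x^2}\,Q_R(x)$, while not a polynomial, combines with $R(x)$ correctly — this is exactly the "reflection" convention of Theorem~\ref{thm:quantum-signal-processing}, where the off-diagonal $\sqrt{1-x^2}$ factors are supplied by $R(x)$ itself. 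So the real content is: (a) produce $Q_R$, and (b) verify that $P = P_R + i\sqrt{1-x^2}\,Q_R$ satisfies (i)–(iv).

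First I would establish existence of $Q_R$. Consider the polynomial $1 - P_R(x)^2$, which is nonnegative on $[-1,1]$ by (ii) and has even degree $2d$. It vanishes at $x = \pm 1$ when... (not necessarily — but $1-x^2$ need not divide it). Instead, the cleaner route is the standard one from~\cite{gilyenQuantumSingularValue2019a}: on the unit circle, substitute $x = \cos\theta$, write everything as a Laurent polynomial in $e^{i\theta}$, and apply the Fejér–Riesz theorem to the nonnegative trigonometric polynomial $1 - P_R(\cos\theta)^2$ to extract a "square root" $\sqrt{1-x^2}\,Q_R(x)$ with $Q_R$ real of degree $d-1$ and parity $(d-1 \bmod 2)$. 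This yields the identity $P_R^2 + (1-x^2)Q_R^2 = 1$ on $[-1,1]$, hence (being a polynomial identity of bounded degree that holds on an interval) everywhere.

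Then I would check the four conditions for $P(x) = P_R(x) + i\sqrt{1-x^2}\,Q_R(x)$. Parity (i): $P_R$ has parity $d\bmod 2$ and $\sqrt{1-x^2}\,Q_R$ likewise contributes the correct parity once folded into $R(x)$, matching the convention of Theorem~\ref{thm:quantum-signal-processing}. Condition (ii): for $x\in[-1,1]$, $|P(x)|^2 = P_R(x)^2 + (1-x^2)Q_R(x)^2 = 1 \le 1$. Condition (iii): for real $|x|\ge 1$ we have $1 - x^2 \le 0$, so $|P(x)|^2 = P_R(x)^2 + (1-x^2)Q_R(x)^2$ with the algebraic identity $P_R^2 + (1-x^2)Q_R^2 \equiv 1$ giving $|P(x)|^2 = 1 \ge 1$ — wait, that over-claims; on $|x|\ge1$ the expression $P_R(x)+i\sqrt{1-x^2}Q_R(x)$ is no longer $P_R + i(\text{real})$ since $\sqrt{1-x^2}$ is imaginary, so $|P(x)|^2 = P_R(x)^2 - (1-x^2)Q_R(x)^2 + \ldots$; the correct statement is that the polynomial $P(x)\,\overline{P(\bar x)}$, which equals the polynomial $1$ identically, forces the required growth, and one unwinds this carefully. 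Condition (iv), for even $d$: evaluate at $x = iy$, where $P_R(iy)$ and $Q_R(iy)$ are real or purely imaginary according to parity, and $\sqrt{1+y^2}$ is real; the identity then gives $P(iy)P^*(iy) = P_R(iy)^2 + (1+y^2)Q_R(iy)^2 \cdot(\pm 1)$, and the parity bookkeeping makes this $\ge 1$.

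The main obstacle is the Fejér–Riesz extraction together with getting all the parity bookkeeping exactly right: one must track how the factor $\sqrt{1-x^2}$ interacts with the even/odd decomposition so that $P_R$ really is the real part of a genuine polynomial $P$ (not merely of $P_R + i\sqrt{1-x^2}Q_R$ in some formal sense), and so that conditions (iii)–(iv) — which are the non-obvious analytic-continuation conditions in Theorem~\ref{thm:quantum-signal-processing} — come out as stated. I expect the cleanest presentation is to cite the complementary-polynomial lemma from~\cite{gilyenQuantumSingularValue2019a} for the existence of $Q_R$ and then devote the bulk of the proof to the four verifications, with particular care on (iii) and (iv) via the substitutions $x = \pm 1/t$ and $x = iy$ respectively.
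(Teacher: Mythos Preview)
The paper does not actually prove Theorem~\ref{thm:real-qsp}; it is stated as a known result and used directly, with the surrounding text only explaining how to \emph{use} the resulting $P$ (implement $P$ and $P^*$ with phases $\pm\Phi$ and average via LCU). So there is no in-paper proof to compare your proposal against; the content is deferred to~\cite{gilyenQuantumSingularValue2019a}.

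As a standalone proof attempt, your proposal has a genuine gap. The theorem asks for a polynomial $P\in\C[x]$ whose real part is $P_R$, and the conditions (i)--(iv) are conditions on that polynomial. Your candidate $P_R(x)+i\sqrt{1-x^2}\,Q_R(x)$ is \emph{not} a polynomial, and you notice this yourself but wave it off via ``the reflection convention''. That conflates two different objects: the pair $(P,Q)$ with $|P|^2+(1-x^2)|Q|^2=1$ that parametrizes the full $2\times 2$ QSP unitary, and the single complex polynomial $P\in\C[x]$ sitting in its top-left corner. The identity $P_R^2+(1-x^2)Q_R^2=1$ you extract by Fej\'er--Riesz is the $(P,Q)$-identity with $P=P_R$ real and $Q=Q_R$ real; it says nothing about an imaginary part of $P$. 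In particular your checks of (iii) and (iv) unravel precisely because you are evaluating a non-polynomial expression and then trying to interpret $|P|^2$ and $P(ix)P^*(ix)$ for it.

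The actual construction in~\cite{gilyenQuantumSingularValue2019a} produces a genuine polynomial imaginary part $P_I\in\R[x]$ of the same parity as $P_R$ by factoring $1-P_R(x)^2$ over $\C$ and distributing its roots: roots in $[-1,1]$ have even multiplicity and are split evenly, complex roots are split by conjugate pairs, and real roots outside $[-1,1]$ are handled so that the resulting $P=P_R+iP_I$ satisfies (iii)--(iv). The Fej\'er--Riesz step is morally there, but it is applied to build $P_I$ as a polynomial, not to manufacture a $\sqrt{1-x^2}$ factor. If you want to salvage your outline, the fix is to aim for a polynomial identity of the form $P_R^2+P_I^2 = 1 + (\text{something nonnegative off }[-1,1])$, not $P_R^2+(1-x^2)Q_R^2=1$.
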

\noindent Thus, for any real polynomial of definite parity and with absolute value bounded by $1$ in our region of interest, the polynomial completed with a suitable imaginary part can be implemented by quantum signal processing. The idea will be to implement both $P, P^*$ (note that the phases $-\Phi$ generate $P^*$), and then implement $P_R = (P+P^*)/2$ via a linear combination of unitaries. It is important to remark that the coefficients of $P$ as well as the phase factors $\Phi$ are computable in polynomial time and with a numerically stable algorithm~\cite{gilyenQuantumSingularValue2019a, haahProductDecompositionPeriodic2019, chaoFindingAnglesQuantum2020}

\begin{definition}[Block-encoding~\cite{gilyenQuantumSingularValue2019a}]
    \label{def:block-encoding}
    Let $A$ be an $s$-qubit matrix and $U$ a $(s+a)$-qubit unitary. We say that $A$ is $(a, \epsilon)$-block-encoded in $U$ if
    \begin{align*}
        \left\lVert A - (\bra{0}^{\otimes a} \otimes \id) U (\ket{0}^{\otimes a} \otimes \id) \right\rVert \le \epsilon
    \end{align*}
\end{definition}
\noindent This means that, if $\epsilon = 0$, $U$ would be of the form
\begin{align*}
    U =
    \begin{bmatrix}
        A & \cdot \\
        \cdot & \cdot
    \end{bmatrix}
\end{align*}
In general, the top-left block of the matrix is $\epsilon$-close to $A$. Notice that, by unitarity of $U$, $\lVert A \rVert \le 1 + \epsilon$. If we need a matrix with norm $\alpha$ we simply block-encode $A / \alpha$. Gilyén et al.~\cite{gilyenQuantumSingularValue2019a} provide a series of constructions which enable different operations on these block encodings.
\begin{definition}[Singular value transformation~\cite{gilyenQuantumSingularValue2019a}]
    \label{def:singular-value-transform}
    Let $A$ be a matrix with singular value decomposition
    \begin{align*}
        A = \sum_i \sigma_i \ketbra{\Tilde{\psi_i}}{\psi_i}
    \end{align*}
    Given a polynomial $P(x)$ of definite parity, the singular value transformation of $A$ using $P$ is defined as
    \begin{align*}
        P^{(SV)}(A) =
        \begin{cases}
            \sum_i P(\sigma_i) \ketbra{\Tilde{\psi_i}}{\psi_i} & \text{$P$ odd} \\
            \sum_i P(\sigma_i) \ketbra{\psi_i}{\psi_i} & \text{$P$ even} \\
        \end{cases}
    \end{align*}
\end{definition}
\noindent It is important to remark that here we could even take the `singular values' to be negative: by negating the corresponding left singular vector we obtain another singular value decomposition, and one can check that Definition~\ref{def:singular-value-transform} remains consistent for any choice of the signs of the singular values. An important question is the following: given a definite-parity polynomial $P(x)$ and a block-encoded matrix $A$, can we obtain a block-encoding of $P^{(SV)}(A)$? The answer is positive.

\begin{theorem}[Quantum singular value transform~\cite{gilyenQuantumSingularValue2019a, martynGrandUnificationQuantum2021}]
    \label{thm:quantum-svt}
    Let $P(x)$ be a polynomial of degree $d$ satisfying (i)-(iv) of Theorem~\ref{def:block-encoding} and let $\Phi \in \R^d$ be the corresponding phase factors. Moreover, $U$ is a unitary that block encodes $A$ as follows
    \begin{align*}
        A = \Tilde{\Pi} U \Pi
    \end{align*}
    where $\Tilde{\Pi}, \Pi$ are projectors. The following unitary produces a block-encoding of $P^{(SV)}(A)$:
    \begin{align*}
        U_\Phi =
        \begin{cases}
            e^{i \phi_1 (2 \Tilde{\Pi} - \id)} U \prod_{k = 1}^{(d-1)/2} \left( e^{i \phi_{2k} (2 \Pi - \id)} U^\dag e^{i \phi_{2k+1} (2 \Tilde{\Pi} - \id)} U \right) & \text{$P$ odd} \\
            \prod_{k = 1}^{d/2} \left( e^{i \phi_{2k-1} (2 \Pi - \id)} U^\dag e^{i \phi_{2k} (2 \Tilde{\Pi} - \id)} U \right) & \text{$P$ even}
        \end{cases}
    \end{align*}
    i.e., $\Tilde{\Pi} U_{\Phi} \Pi = P^{(SV)}(A)$.
\end{theorem}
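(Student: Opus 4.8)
The plan is to carry out the ``qubitization'' argument of Gilyén et al.~\cite{gilyenQuantumSingularValue2019a}: decompose the ambient Hilbert space into subspaces of dimension at most two that are jointly invariant under $\Pi$, $\tilde{\Pi}$ and under the action of $U$ (and $U^\dag$), observe that on each two-dimensional block $U$ looks exactly like the reflection $R(\sigma_i)$ of Theorem~\ref{thm:quantum-signal-processing} while the reflections $2\Pi-\id$ and $2\tilde{\Pi}-\id$ look exactly like the Pauli $Z$, and then quote Theorem~\ref{thm:quantum-signal-processing} block by block. The $\cdot$-entries of $U_\Phi$ never matter, since we only assert something about the compression $\tilde{\Pi} U_\Phi \Pi$.

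Concretely, write the singular value decomposition $A=\sum_i\sigma_i\ketbra{\tilde{\psi}_i}{\psi_i}$ with $\{\ket{\psi_i}\}$ an orthonormal basis of $\operatorname{range}(\Pi)$ and $\{\ket{\tilde{\psi}_i}\}$ one of $\operatorname{range}(\tilde{\Pi})$, and for each index with $0<\sigma_i<1$ set
\begin{align*}
    \ket{\tilde{\psi}_i^\perp}=\frac{(\id-\tilde{\Pi})\,U\ket{\psi_i}}{\sqrt{1-\sigma_i^2}},\qquad \ket{\psi_i^\perp}=\frac{(\id-\Pi)\,U^\dag\ket{\tilde{\psi}_i}}{\sqrt{1-\sigma_i^2}}.
\end{align*}
Using $A=\tilde{\Pi} U\Pi$ and $U^\dag U=\id$, a direct computation gives $U\ket{\psi_i}=\sigma_i\ket{\tilde{\psi}_i}+\sqrt{1-\sigma_i^2}\,\ket{\tilde{\psi}_i^\perp}$ and $U^\dag\ket{\tilde{\psi}_i}=\sigma_i\ket{\psi_i}+\sqrt{1-\sigma_i^2}\,\ket{\psi_i^\perp}$, from which the subspaces $\calH_i=\vspan{\ket{\psi_i},\ket{\psi_i^\perp}}$ and $\tilde{\calH}_i=\vspan{\ket{\tilde{\psi}_i},\ket{\tilde{\psi}_i^\perp}}$ are seen to be two-dimensional, mutually orthogonal across distinct $i$, with $U$ mapping $\calH_i$ onto $\tilde{\calH}_i$. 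Unitarity of $U$ then forces the $(2,2)$ entry to be $-\sigma_i$, so in the ordered bases $(\ket{\psi_i},\ket{\psi_i^\perp})$ and $(\ket{\tilde{\psi}_i},\ket{\tilde{\psi}_i^\perp})$ the restriction $U|_{\calH_i}$ is exactly $R(\sigma_i)$, and (since $R(\sigma_i)$ is real symmetric) $U^\dag|_{\tilde{\calH}_i}$ is $R(\sigma_i)$ as well in the matching bases. Moreover $\Pi$ restricts to $\proj{\psi_i}$ on $\calH_i$ and $\tilde{\Pi}$ to $\proj{\tilde{\psi}_i}$ on $\tilde{\calH}_i$, hence $2\Pi-\id$, $2\tilde{\Pi}-\id$ act as $Z$ and $e^{i\phi(2\Pi-\id)}$, $e^{i\phi(2\tilde{\Pi}-\id)}$ act as $e^{i\phi Z}$ on the respective blocks.

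Now restrict $U_\Phi$ to a single $\calH_i$. In the odd-degree case the alternating product threads $U$ and $U^\dag$ so that it hops between $\calH_i$ and $\tilde{\calH}_i$ an odd number of times, ending in $\tilde{\calH}_i$; tracking the bases above, the block-restricted word becomes precisely the single-qubit sequence $\prod_{j=1}^d\bigl(e^{i\phi_jZ}R(\sigma_i)\bigr)$ of Theorem~\ref{thm:quantum-signal-processing}, whose top-left entry is $P(\sigma_i)$. Reading off the coefficient of $\ketbra{\tilde{\psi}_i}{\psi_i}$ gives $\tilde{\Pi} U_\Phi\Pi\ket{\psi_i}=P(\sigma_i)\ket{\tilde{\psi}_i}$; summing over $i$ reproduces the odd branch of Definition~\ref{def:singular-value-transform}. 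In the even-degree case the number of hops is even, so $U_\Phi$ preserves each $\calH_i$, the same computation gives $P(\sigma_i)\proj{\psi_i}$ on the block, and summing reproduces the even branch of Definition~\ref{def:singular-value-transform}.

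It remains to dispose of the degenerate pieces not covered above: the one-dimensional blocks with $\sigma_i\in\{0,1\}$ (where $U\ket{\psi_i}$ lies entirely inside $\operatorname{range}(\tilde{\Pi})$ or entirely in its complement), the kernels of $\Pi$ and of $\tilde{\Pi}$, and the part of the space on which neither projector acts. On each of these $U_\Phi$ acts as $\pm\id$ or as a phase, and one checks directly that its contribution to $\tilde{\Pi} U_\Phi\Pi$ is $P(\sigma_i)\ketbra{\tilde{\psi}_i}{\psi_i}$ with $\sigma_i\in\{0,1\}$ (or is zero). This is exactly where conditions (iii) and (iv) of Theorem~\ref{thm:quantum-signal-processing}, together with the parity constraint (i), are used: e.g.\ for even $d$ one needs the degenerate sequence at $\sigma=0$ to collapse to the correct value $P(0)$, and the boundary behaviour at $\sigma=1$ forces $|P(1)|=1$. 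I expect the main obstacle to be precisely this bookkeeping — setting up the invariant-subspace decomposition cleanly (Jordan's lemma plus a careful treatment of the $\sigma\in\{0,1\}$ and off-support blocks) and verifying that $U|_{\calH_i}=R(\sigma_i)$ on the nose, so that Theorem~\ref{thm:quantum-signal-processing} applies verbatim to each block; once that is in place, the parity case analysis and the summation over blocks are routine.
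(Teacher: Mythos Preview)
Your proposal is correct and follows exactly the approach the paper indicates: the paper does not give a self-contained proof but only the one-line sketch ``considering the subspaces spanned by the $i$-th singular vectors, $U$ acts as $R(\sigma_i)$, while the rotations act as $Z$-rotations, and we can apply Theorem~\ref{thm:quantum-signal-processing} on each of these subspaces,'' deferring details to \cite{gilyenQuantumSingularValue2019a, martynGrandUnificationQuantum2021}. Your write-up is a faithful expansion of precisely this qubitization/Jordan's-lemma argument, including the degenerate-block bookkeeping that the sketch omits.
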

\noindent Roughly speaking, the proof shows that, considering the subspaces spanned by the $i$-th singular vectors, $U$ acts as $R(\sigma_i)$, while the rotations acts as $Z$-rotations, and we can apply Theorem~\ref{thm:quantum-signal-processing} on each of these subspaces. This enables to carry out a singular value transform using any polynomial constructible with quantum signal processing. This transformation is also robust, in the sense that, if $A$ is $(a, \epsilon)$-block-encoded in $U$, then we have that $P^{(SV)}(A)$ is $(a+1, 4d\sqrt{\epsilon})$-block-encoded in $U_{\Phi}$~\cite{gilyenQuantumSingularValue2019a}. Most of the time we will focus on the case where $A$ is Hermitian. In this case the singular value and eigenvalue transformations will coincide (remember that here, we can intend singular values to also be negative), i.e., $P^{(SV)}(A) = P(A)$.
\section{The logarithm of a unitary}
\label{sec:unitary-logarithm}

\noindent Keeping in mind what we presented in the last section, consider the following problem.
\begin{problem}[Unitary logarithm]
    \label{def:unitary-logarithm}
    Let $\calH$ be an Hermitian matrix satisfying $\lVert \calH \rVert < 1$, and denote with $U = e^{i \pi \calH}$ the corresponding unitary. Given controlled versions of $U, U^\dag$, implement a block-encoding $C$ of $\calH$, i.e.,
    \begin{align*}
        C =
        \begin{bmatrix}
            \calH & \cdot \\
            \cdot & \cdot
        \end{bmatrix}
    \end{align*}
\end{problem}
\noindent We now show a simple way introduced in~\cite{lowHamiltonianSimulationUniform2017} to solve Problem~\ref{def:unitary-logarithm}. This is used in~\cite{gilyenQuantumSingularValue2019a} to make fractional queries to $U$, i.e., to implement $U^t$ for a non-necessarily integer $t$, namely by extracting the Hamiltonian, multiplying it by a constant using block-encoding arithmetics, and then exponentiating it back with a polynomial that approximates the complex exponential function. Another example in the same work, which we are going to generalize, was done for Gibbs sampling.
The idea is as follows: by doing some simple calculations one can check that
\begin{align} 
    \label{eq:hamiltonian-sine-construction}
    (\bra{0} H \otimes \id) cU^\dag (Y \otimes \id) cU (H \ket{0} \otimes \id) = \sin(\pi \calH)
\end{align}
i.e., this simple circuit (as shown in Figure~\ref{fig:hamiltonian-sine-construction}) is a $(1, 0)$-block-encoding for $\sin(\pi \calH)$, where $H$ is the Hadamard gate, $cU$ is the controlled-$U$ gate, and $Y$ is the Pauli matrix. In order to obtain a block-encoding of $\calH$, we need to invert the sine function.
\begin{figure}
    \centering
    \input{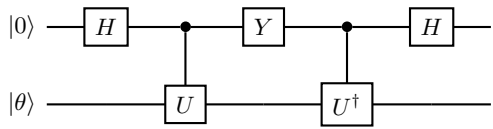}
    \caption{Construction of the sine of an Hamiltonian. One can notice that, if $\ket{\theta}$ is an eigenstate of $U$ with eigenvalue $e^{i\pi \theta}$, at the end we obtain the state $\sin(\pi\theta) \ket{0} - i \cos(\pi\theta) \ket{1}$. In other words, the unitary $C$ representing this circuit perfectly block-encodes $\sin(\pi \calH)$.}
    \label{fig:hamiltonian-sine-construction}
\end{figure}

\begin{theorem}[\cite{gilyenQuantumSingularValue2019a}]
    \label{thm:arcsin-approx}
    An $\epsilon$-polynomial approximation of $f(x) = \frac{1}{\pi} \arcsin(x)$ in the interval $(-1 + \delta, 1 - \delta)$ has degree $d = \bigO\tuples*{\frac{1}{\delta} \log \frac{1}{\epsilon}}$.
\end{theorem}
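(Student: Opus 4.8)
The plan is to simply truncate the Maclaurin series of $\tfrac1\pi\arcsin$, whose coefficients are explicit and nonnegative. First I would recall the expansion
\begin{align*}
    \arcsin(x) = \sum_{k=0}^\infty \frac{1}{2k+1}\binom{2k}{k}\frac{x^{2k+1}}{4^k}, \qquad |x| \le 1,
\end{align*}
so the natural candidate is the partial sum $P_n(x) = \tfrac1\pi\sum_{k=0}^{n}\frac{1}{2k+1}\binom{2k}{k}\frac{x^{2k+1}}{4^k}$, a polynomial of degree $2n+1$. Two facts come for free and are worth noting: $P_n$ is odd, so it has the parity required by condition (i) of Theorem~\ref{thm:quantum-signal-processing}; and, being a partial sum of a series with nonnegative terms that converges at $x=1$ to $\arcsin(1)=\pi/2$, it satisfies $|P_n(x)| \le \tfrac12$ on all of $[-1,1]$, so condition (ii) holds and $P_n$ can actually be implemented via Theorems~\ref{thm:real-qsp} and~\ref{thm:quantum-svt}. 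The only real task, then, is to choose $n$ so that $P_n$ is $\epsilon$-close to $f$ on $(-1+\delta,1-\delta)$.

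For the error bound I would estimate the tail crudely. On $|x|\le 1-\delta$, using $\binom{2k}{k}\le 4^k$ and $\tfrac1{2k+1}\le 1$, the tail is dominated by a geometric series,
\begin{align*}
    |f(x)-P_n(x)| \;\le\; \frac1\pi\sum_{k=n+1}^\infty (1-\delta)^{2k+1} \;=\; \frac{(1-\delta)^{2n+3}}{\pi\,\delta(2-\delta)} \;\le\; \frac{(1-\delta)^{2n+3}}{\delta}.
\end{align*}
Forcing the right-hand side below $\epsilon$ and using $-\log(1-\delta)\ge\delta$, it suffices that $(2n+3)\,\delta \ge \log\tfrac1{\epsilon\delta}$, i.e.\ $n = \bigO\!\left(\tfrac1\delta\log\tfrac1{\epsilon\delta}\right)$, giving degree $d = 2n+1 = \bigO\!\left(\tfrac1\delta\log\tfrac1{\epsilon\delta}\right)$.

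The one place I expect friction is converting $\log\tfrac1{\epsilon\delta}$ into the advertised $\log\tfrac1\epsilon$: the spurious additive $\log\tfrac1\delta$ is exactly the price of the $1/\delta$ picked up when summing the geometric tail, and it is negligible in the regime of interest, where $\delta$ is polynomially related to $\epsilon$ (equivalently $\log\tfrac1\delta = \bigO(\log\tfrac1\epsilon)$), as is the case in all our applications. If a sharper bound were wanted, I would instead rescale $x \mapsto (1-\delta)x$, note that $\arcsin((1-\delta)\,\cdot\,)$ is analytic inside a Bernstein ellipse with parameter $1+\Theta(\sqrt\delta)$, and truncate its Chebyshev expansion; this yields the even better degree $\bigO\!\left(\tfrac1{\sqrt\delta}\log\tfrac1\epsilon\right)$, but the elementary argument above is already more than sufficient.
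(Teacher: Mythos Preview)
Your approach is essentially the paper's: write down the Maclaurin series of $\tfrac{1}{\pi}\arcsin$ and truncate. The paper's ``proof'' is in fact just the series followed by a pointer to \cite[Theorem~68]{gilyenQuantumSingularValue2019a} for the tail estimate, so your self-contained geometric-series bound actually supplies more detail than the paper does. Your side observations that $P_n$ is odd and bounded by $1/2$ on $[-1,1]$ (hence implementable via Theorems~\ref{thm:real-qsp} and~\ref{thm:quantum-svt}) are correct and useful, even if the paper does not spell them out here.

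The only point worth flagging is the one you already raised: your elementary tail bound gives $d=\bigO\!\big(\tfrac{1}{\delta}\log\tfrac{1}{\epsilon\delta}\big)$ rather than the stated $\bigO\!\big(\tfrac{1}{\delta}\log\tfrac{1}{\epsilon}\big)$. In the paper's application $\lVert \calH\rVert\le 1/2$, so $\delta$ is a fixed constant and the discrepancy vanishes; your remark that the two coincide whenever $\log\tfrac{1}{\delta}=\bigO(\log\tfrac{1}{\epsilon})$ is exactly the right caveat. The cited Theorem~68 uses a slightly finer estimate (exploiting $\binom{2k}{k}4^{-k}=\Theta(k^{-1/2})$) to shave the extra $\log\tfrac{1}{\delta}$, but for the purposes of this work your bound is entirely adequate.
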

\begin{proof}
    The Taylor series of $f(x)$ is
    \begin{align*}
        f(x) = \sum_{k = 0}^\infty \frac{1}{\pi} \binom{2 k}{k} \frac{2^{-2k}}{2k+1} x^{2k+1}
    \end{align*}
    and the truncation up to the first $d$ terms is $\epsilon$-close to $f$ in the interval $(-1 + \delta, 1 - \delta)$~\cite[Theorem 68]{gilyenQuantumSingularValue2019a}.
\end{proof}

\noindent Therefore, assuming $\lVert \calH \rVert < 1 - \delta$ we can construct a $(2, \epsilon)$-block-encoding of $\calH$ with only $\bigO(\frac{1}{\delta} \log \frac{1}{\epsilon})$ calls to $cU, cU^\dag$.
\section{Quantum-state preparation}
\label{sec:quantum-state-preparation}

\noindent The quantum-state preparation problem can be stated without loss of generality as follows:
\begin{problem}
    \label{def:quantum-state-prep}
    Let $N = 2^n$. Given amplitudes $c = (c_0, \cdots, c_{N-1}) \in [0,1]^N$, construct the state
    $$\ket{\psi_c} = \sum_x c_x \ket{x}$$
    from the state $\ket{0}^{\otimes n}$ up to error $\epsilon$. More formally, construct a quantum circuit $C$ such that
    \begin{align*}
        \left\lVert C \ket{0}^{\otimes n} - \ket{\psi_c} \right\rVert \le \epsilon
    \end{align*}
\end{problem}
\noindent It will be clear later why we do not need $c_i$ to be complex. In this work, we consider the \emph{black-box} quantum-state preparation problem, where we assume the amplitudes are computed by an algorithm $c(x) \in [0, 1]$, and this algorithm is given as a quantum oracle
\begin{align*}
    \bigO_c \ket{x} \ket{0}^{\otimes m} = \ket{x} \ket{c(x)}
\end{align*}
which computes the $m$ bits after the decimal point. Moreover, we will not need to assume that the $c^2(x)$ are normalized, but we will assume the algorithm has access to the average $\gamma = \frac{1}{N} \sum_{x} c^2(x)$. Thus, in the end the target state will be
\begin{align*}
    \ket{\psi_c} = \frac{1}{\sqrt{N \gamma}} \sum_{x} c(x) \ket{x}
\end{align*}
The idea is quite simple: consider the unitary $U_c$ acting as follows
\begin{align*}
    U_c \ket{x} = e^{i \pi c(x) / 2} \ket{x}
\end{align*}
thus $U_c = \diag(e^{i \pi c(x) / 2})_x = e^{i\pi H_c}$ where
$$H_c = \diag(c(x)/2)_x.$$
This unitary is actually efficiently implementable using only two copies of $\bigO_c, \bigO_c^\dag$, using a standard construction (Figure~\ref{fig:phase-oracle-construction}). This is the reason why we only care for positive real amplitudes, as applying relative phases is always efficiently realizable with a similar transformation. Extracting the logarithm of this unitary using the construction of Section~\ref{sec:unitary-logarithm}, yields a $(2, \epsilon)$-block-encoding of $H_c$ using only $\bigO(\log \frac{1}{\epsilon})$ calls to $\bigO_c$ (notice that $\lVert H_c \rVert \le \frac{1}{2}$). We denote the unitary of this block-encoding with $C$.
\begin{figure}
    \centering
    \input{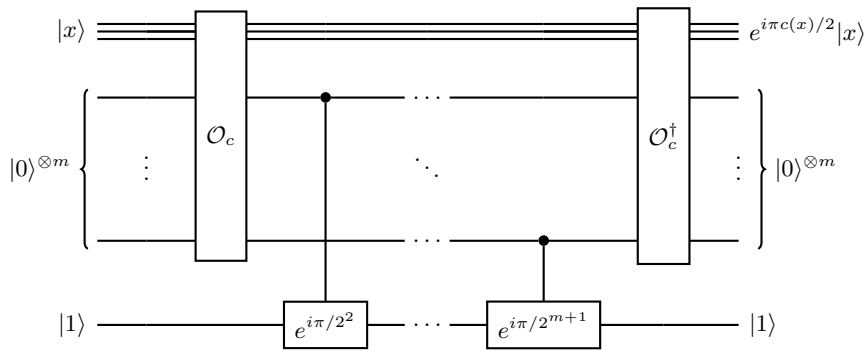}
    \caption{Construction of $U_c$ using two copies of $\bigO_c, \bigO^\dag_c$. The controlled phase rotations give a contribution for each bit of the output, so that the total phase obtained is $\pi c(x)/2$. In order to obtain a $\epsilon$-approximation on $c(x)$ it is sufficient to take $m = \bigO(\log(1/\epsilon))$ ancilla qubits. Note that we do not really need the additional qubit, which is added only for ease of exposition.}
    \label{fig:phase-oracle-construction}
\end{figure}

We assume for ease of exposition that the $\arcsin$ approximation is perfect, and we will postpone the error analysis to the next section. If we now apply this operator to the equal superposition $\ket{+} := \ket{+}^{\otimes n}$ we obtain
\begin{align}
    \label{eq:state-prep-block-encoding}
    C \ket{00} \ket{+} = \ket{00} H_c \ket{+} + \ket{\Phi}
\end{align}
where $\ket{00}$ is the initial state of the two control qubits, and $\ket{\Phi}$ is the garbage state we obtain if we fail, i.e., we pick the wrong block of the block encoding and the two control qubits return $\neq 00$ (which means $\braket{00}{\Phi} = 0$). The state associated with the $\ket{00}$ component is
\begin{align*}
    H_c \ket{+} & = \frac{1}{\sqrt{N}} \sum_x H_c \ket{x} = \frac{1}{2\sqrt{N}} \sum_x c(x) \ket{x} = \frac{\sqrt{\gamma}}{2} \ket{\psi_c}
\end{align*}
Thus, by replacing this in Eq.~(\ref{eq:state-prep-block-encoding}) we obtain
\begin{align*}
    C \ket{00} \ket{+} = \frac{\sqrt{\gamma}}{2} \ket{00} \ket{\psi_c} + \ket{\Phi}
\end{align*}
This means that, if we measure the control qubits, we post-select the correct block and get our state with probability $\gamma/4$. In order to amplify the success probability we employ a fixed-point amplitude amplification procedure, proving its correctness with the aid of Theorem~\ref{thm:quantum-svt}.

\begin{lemma}[Fixed-point amplitude amplification]
    \label{thm:non-unitary-amplitude-amplification}
    Let $\ket{\Psi} = \ket{00} \ket{\psi_a} = \ket{00} S \ket{0^n}$ be our initial state and $\ket{w} = \ket{00} \ket{\psi_c}$ is the target state. Given the unitary $C$ acting as
    \begin{align*}
        C \ket{\Psi} = \sigma \ket{w} + \ket{\Phi}
    \end{align*}
    where $\braket{00}{\Phi} = 0$, i.e., $\ket{\Phi}$ is our `garbage' state, given by the other block of the encoding. It is possible to obtain $\ket{w}$ with probability $1 - \delta$ using $\bigO(\frac{1}{\sigma} \log \frac{1}{\delta})$ copies of $C$ and $S$. 
\end{lemma}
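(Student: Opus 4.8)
The plan is to realize the amplitude amplification as a quantum singular value transform applied to the matrix $A$ that $C$ block-encodes, and then invoke Theorem~\ref{thm:quantum-svt}. First I would identify the relevant block structure: writing $\Pi = \proj{00} \otimes \id$ (restricted to the image of $S$, so effectively $\ket{00}\ket{\psi_a}$) and $\Tilde{\Pi} = \proj{00} \otimes \id$, the hypothesis $C\ket{\Psi} = \sigma \ket{w} + \ket{\Phi}$ with $\braket{00}{\Phi}=0$ says precisely that $\Tilde{\Pi} C \Pi$ has a single nonzero singular value $\sigma$, with right singular vector $\ket{\Psi}$ and left singular vector $\ket{w}$. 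So $A := \Tilde{\Pi} C \Pi$ is (up to the trivial rest of the space) rank one with singular value $\sigma$.

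Next I would choose the polynomial. We want an odd polynomial $P$ with $P(\sigma) \approx 1$ while keeping $|P(x)| \le 1$ on $[-1,1]$ so that QSP applies; the standard fixed-point amplitude amplification polynomial (the one built from Chebyshev polynomials as in~\cite{gilyenQuantumSingularValue2019a}, or equivalently $P(x) = 1 - (1 - x^2)^{L}$ composed appropriately to respect parity) achieves $|P(\sigma)| \ge 1 - \delta$ for all $\sigma$ in a known lower-bounded range, with degree $d = \bigO(\frac{1}{\sigma}\log\frac{1}{\delta})$. Since $P$ is odd it satisfies condition (i), and by construction (ii); conditions (iii)-(iv) are not needed for the real-QSP route of Theorem~\ref{thm:real-qsp} — we actually apply $P^{(SV)}$ via Theorem~\ref{thm:quantum-svt} using the completed complex polynomial and then take the real part by an LCU of $U_\Phi$ and $U_{-\Phi}$, exactly as described after Theorem~\ref{thm:real-qsp}. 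Then by Theorem~\ref{thm:quantum-svt}, the circuit $U_\Phi$ built from $C$, $C^\dag$, and the reflections $e^{i\phi(2\Pi-\id)}, e^{i\phi(2\Tilde\Pi-\id)}$ satisfies $\Tilde\Pi U_\Phi \Pi = P^{(SV)}(A)$, hence $\Tilde\Pi U_\Phi \ket{\Psi} = P(\sigma)\ket{w}$, so measuring the $\ket{00}$ flag succeeds with probability $|P(\sigma)|^2 \ge (1-\delta)^2 \ge 1 - 2\delta$ (rescale $\delta$). Counting uses: $U_\Phi$ calls $C$ (and $C^\dag$) $d = \bigO(\frac{1}{\sigma}\log\frac1\delta)$ times, and each reflection $e^{i\phi(2\Pi-\id)}$ about $\ket{\Psi} = \ket{00}\ket{\psi_a}$ is implemented with $O(1)$ calls to $S$ and $S^\dag$ (reflect about $\ket{0^n}$, conjugated by $S$), giving the claimed $\bigO(\frac1\sigma\log\frac1\delta)$ copies of $C$ and $S$.

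The one subtlety I would treat carefully — and I expect it to be the main obstacle — is the "obliviousness": Theorem~\ref{thm:quantum-svt} is stated for the reflections $2\Pi - \id$ and $2\Tilde\Pi - \id$ where $\Pi, \Tilde\Pi$ are the \emph{block-encoding} projectors ($\proj{00}\otimes\id$), but here we additionally want to pin the right-hand input to the specific vector $\ket{\Psi}=\ket{00}S\ket{0^n}$ rather than the whole $\ket{00}$-block. The resolution is to instead take $\Pi' = \proj{\Psi}$ (implementable via $S$) on the input side while keeping $\Tilde\Pi = \proj{00}\otimes\id$ on the output side; one must check that $\Tilde\Pi C \Pi'$ still has the single relevant singular value $\sigma$ with the stated singular vectors — which follows directly from $C\ket\Psi = \sigma\ket w + \ket\Phi$, $\braket{00}{\Phi}=0$ — so that the QSVT circuit with these projectors produces $P(\sigma)\ket w$ from $\ket\Psi$ on the nose, with no dependence on the unknown "true" normalized state. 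I would also note that because $\Pi'$ is rank one on the input side, the polynomial only ever needs to be correct at the single point $\sigma$, which is why a fixed-point (rather than exact-rotation) polynomial of degree $\bigO(\frac1\sigma\log\frac1\delta)$ suffices and we never need to know $\sigma$ exactly.
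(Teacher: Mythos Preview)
Your proposal is correct and follows essentially the same approach as the paper: define the asymmetric projectors $\Tilde{\Pi} = \proj{00}\otimes\id$ and $\Pi = \proj{\Psi}$, observe that $\Tilde{\Pi} C \Pi = \sigma\ketbra{w}{\Psi}$ is rank one with singular value $\sigma$, and apply QSVT (Theorem~\ref{thm:quantum-svt}) with a degree-$\bigO(\tfrac{1}{\sigma}\log\tfrac{1}{\delta})$ odd polynomial sending $\sigma$ to a value $\ge 1-\delta/2$. The paper specifically uses the polynomial approximation to the sign function from~\cite[Corollary~6]{lowHamiltonianSimulationUniform2017} rather than the Chebyshev fixed-point polynomial you mention, but these play identical roles; also, your ``subtlety'' paragraph is not an obstacle but exactly the paper's starting point --- it takes $\Pi = \proj{\Psi}$ from the outset, which immediately makes the block rank one and makes the reflection implementable via $S$.
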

\noindent 
In our application, the initial state is $\ket{+}$, so $S = H^{\otimes n}$ is the $n$-fold Hadamard gate, which is easy to construct.
\begin{proof}
    Consider the two projectors
    \begin{align*}
        \Tilde{\Pi} & = \proj{00} \otimes \id \\
        \Pi & = \proj{\Psi} = \proj{00} \otimes S \proj{0^n} S^\dag
    \end{align*}
    Then
    \begin{align*}
        \Tilde{\Pi} C \Pi = \sigma \ketbra{w}{\Psi}
    \end{align*}
    i.e., $C$ block encodes a rank-one matrix with a singular value $\sigma$. All we need to do is to design a  $P(x)$ that satisfies $|P(\sigma)| \ge 1 - \delta/2$. In this way, we apply Theorem~\ref{thm:quantum-svt} to transform this singular value, so our success probability will be $\ge (1 - \delta/2)^2 \ge 1 - \delta$. A polynomial approximation $P$ to the sign function can achieve this~\cite[Corollary~6]{lowHamiltonianSimulationUniform2017}, and has degree $\bigO(\frac{1}{\Delta} \log \frac{1}{\delta})$:
    \begin{align*}
        |P(x)| \ge 1 - \delta/2 \,\,\, \text{for $|x| \ge \Delta$}
    \end{align*}
    and we plug $\Delta = \sigma$ (see Figure~\ref{fig:sign-function-approx}).
\end{proof}
\noindent Therefore, as $\sigma = \sqrt{\gamma}/2$, by applying the construction of Theorem~\ref{thm:non-unitary-amplitude-amplification}, we obtain
\begin{lemma}
    \label{thm:perfect-quantum-state-perp}
    Starting from the state $\ket{0}^{\otimes n}$, we can construct the state $\ket{\psi_c}$ with probability $1 - \delta$ using
    $$\bigO\left(\frac{1}{\sqrt{\gamma}} \log \frac{1}{\delta}\right)$$
    copies of $C$.
\end{lemma}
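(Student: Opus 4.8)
The plan is to obtain this lemma as a direct specialization of Lemma~\ref{thm:non-unitary-amplitude-amplification}, feeding it the data already assembled in the preceding paragraphs. First I would take $S = H^{\otimes n}$, so that $S\ket{0}^{\otimes n} = \ket{+}$ is exactly the equal superposition used above, and recall the identity derived just before the statement,
\begin{align*}
    C \ket{00}\ket{+} = \frac{\sqrt{\gamma}}{2} \ket{00}\ket{\psi_c} + \ket{\Phi},
\end{align*}
with $\braket{00}{\Phi} = 0$. This is precisely the hypothesis of Lemma~\ref{thm:non-unitary-amplitude-amplification} with $\ket{\Psi} = \ket{00}\ket{+} = \ket{00} S \ket{0}^{\otimes n}$, target $\ket{w} = \ket{00}\ket{\psi_c}$, and amplitude $\sigma = \sqrt{\gamma}/2$.

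Then I would simply apply Lemma~\ref{thm:non-unitary-amplitude-amplification}: it yields $\ket{w}$, hence $\ket{\psi_c}$ on the system register once the two (now $\ket{00}$) control qubits are discarded, with probability $1 - \delta$, using $\bigO(\frac{1}{\sigma}\log\frac{1}{\delta})$ copies of $C$ and of $S$. Substituting $\sigma = \sqrt{\gamma}/2$ gives $\bigO(\frac{1}{\sqrt{\gamma}}\log\frac{1}{\delta})$ copies of $C$, as claimed; the copies of $S = H^{\otimes n}$ are a single Hadamard layer making no oracle calls, so they do not affect the stated count. I would also remark — though it is not part of what the lemma claims — that $C$ is itself a $(2,\epsilon)$-block-encoding and that the amplitude amplification adds one more ancilla via Theorem~\ref{thm:quantum-svt}, so the overall ancilla overhead remains $\bigO(1)$ on top of the $m = \bigO(\log(1/\epsilon))$ qubits of $\bigO_c$.

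I do not expect a genuine obstacle here: the only subtlety is bookkeeping, namely that we are working with the \emph{ideal} $C$ for which $C\ket{00}\ket{+} = \frac{\sqrt{\gamma}}{2}\ket{00}\ket{\psi_c} + \ket{\Phi}$ holds exactly, with the errors from the $\arcsin$ polynomial and from the sign polynomial inside amplitude amplification deliberately deferred to Section~\ref{sec:error-analysis}. Thus the entire proof reduces to checking that $H^{\otimes n}\ket{0}^{\otimes n} = \ket{+}$ is the correct fixed initial state and that the substitution $\sigma = \sqrt{\gamma}/2$ turns the query bound of Lemma~\ref{thm:non-unitary-amplitude-amplification} into $\bigO(\frac{1}{\sqrt{\gamma}}\log\frac{1}{\delta})$.
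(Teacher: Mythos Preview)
Your proposal is correct and matches the paper's approach exactly: the paper derives this lemma in a single sentence by invoking Lemma~\ref{thm:non-unitary-amplitude-amplification} with $\sigma = \sqrt{\gamma}/2$, precisely as you do. Your additional remarks about $S = H^{\otimes n}$ and ancilla overhead are fine elaborations but not needed for the argument.
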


\begin{figure}
    \centering
    \input{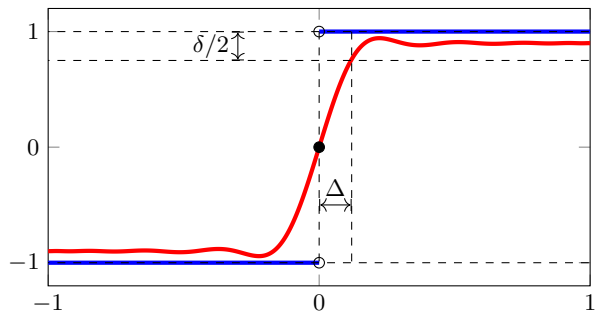}
    \caption{Approximation of the sign function using an odd polynomial. We use a polynomial of degree $\bigO(\frac{1}{\Delta} \log \frac{1}{\delta})$ to obtain a $\delta/2$-approximation of the sign function when $|x| \ge \Delta$. By increasing the degree we can increase both the accuracy of the approximation and the range. In amplitude amplification settings, the singular value we want to amplify usually sits close to $0$, and we want to transform it as close as possible to $1$.}
    \label{fig:sign-function-approx}
\end{figure}
\section{Error analysis}
\label{sec:error-analysis}
\noindent In the previous section we considered $C$ being a perfect block-encoding of $H_c = \arcsin(\sin H_c)$ and we proved that, under this assumption, our algorithm delivers exactly $\ket{\psi_c}$ with probability $1 - \delta$. We now replace the unitary $C$ with some unitary $\Tilde{C}$ such that
\begin{align*}
    \Tilde{C} \ket{00} \ket{+} = \ket{00} \Tilde{H}_c \ket{+} + \ket{\Tilde{\Phi}}
\end{align*}
where
\begin{align}
    \label{eq:error-hamiltonian-spectral-norm}
    \lVert \Tilde{H}_c - H_c \rVert \le \epsilon.
\end{align}
Notice that such $\Tilde{C}$ can be actually implemented using $\bigO(\log \frac{1}{\epsilon})$ calls to $\bigO_c$, as per Theorem~\ref{thm:arcsin-approx} (for now we assume that $c(x)$ has finite precision and is perfectly computed by $\bigO_c$, we deal with the case where $c(x)$ has to be approximated by an $m$-bit truncation towards the end of the section). Thus, the eigenvalues $\frac{1}{2}\Tilde{c}(x)$ of $\Tilde{H}_c$ satisfy $|\frac{1}{2} \Tilde{c}(x) - \frac{1}{2} c(x)| \le \epsilon$, and $\Tilde{H}_c \ket{+}$ will give us the sub-normalized state
\begin{align*}
    \Tilde{H}_c \ket{+} & = \frac{1}{2 \sqrt{N}} \sum_x \Tilde{c}(x) \ket{x} =: \frac{\sqrt{\Tilde{\gamma}}}{2} \ket{\Tilde{\psi}_c}
\end{align*}
where $\ket{\Tilde{\psi}_c}$ and $\Tilde{\gamma}$ are defined analogously as $\ket{\psi_c}$ and $\gamma$. In particular, the former is the final state returned by our algorithm, after the amplification procedure. Thus our task is to bound $\lVert \ket{\Tilde{\psi}_c} - \ket{\psi_c}\rVert$. The first observation is that, by using Eq.~(\ref{eq:error-hamiltonian-spectral-norm})
\begin{align}
    \label{eq:asymmetric-distance-bound}
    \left\lVert \frac{\sqrt{\Tilde{\gamma}}}{2} \ket{\Tilde{\psi}_c} - \frac{\sqrt{\gamma}}{2} \ket{\psi_c} \right\rVert & = \left\lVert (\Tilde{H}_c - H_c) \ket{+} \right\rVert \le \left\rVert \Tilde{H}_c - H_c \right\lVert \le \epsilon
\end{align}
A second observation is that $\gamma, \Tilde{\gamma}$ are close
\begin{align*}
    |\Tilde{\gamma} - \gamma| & \le \frac{1}{N} \sum_{x} |\Tilde{c}^2(x) - c^2(x)| = \frac{1}{N} \sum_{x} |\Tilde{c}(x) - c(x)| \cdot |\Tilde{c}(x) + c(x)| \le 2 \epsilon
\end{align*}
Thus, the distance between the two square roots is
\begin{align*}
    |\sqrt{\gamma} - \sqrt{\Tilde{\gamma}}| & = \frac{|\Tilde{\gamma} - \gamma|}{|\sqrt{\gamma} + \sqrt{\Tilde{\gamma}}|} \le \frac{\epsilon}{2 \min\brackets{\sqrt{\gamma}, \sqrt{\Tilde{\gamma}}}}
\end{align*}
If we assume that $\epsilon \le \frac{\gamma}{4}$, then $\Tilde{\gamma} \ge \gamma/2$ and the bound will become
\begin{align}
    \label{eq:average-distance-sqrt-bound}
    |\sqrt{\gamma} - \sqrt{\Tilde{\gamma}}| & \le \frac{\epsilon}{\sqrt{2} \sqrt{\gamma} }
\end{align}
Thus, the total error will be
\begin{align*}
    \left\lVert \ket{\Tilde{\psi}_c} - \ket{\psi_c} \right\rVert & = \frac{2}{\sqrt{\gamma}} \left\lVert \frac{\sqrt{\gamma}}{2}\ket{\Tilde{\psi}_c} - \frac{\sqrt{\gamma}}{2}\ket{\psi_c} \right\rVert \\
    & = \frac{2}{\sqrt{\gamma}} \left\lVert \frac{\sqrt{\gamma} - \sqrt{\Tilde{\gamma}}}{2}\ket{\Tilde{\psi}_c} + \frac{\sqrt{\Tilde{\gamma}}}{2}\ket{\Tilde{\psi}_c} - \frac{\sqrt{\gamma}}{2}\ket{\psi_c} \right\rVert \\
    & \le \frac{\left| \sqrt{\gamma} - \sqrt{\Tilde{\gamma}} \right|}{\sqrt{\gamma}} + \frac{2}{\sqrt{\gamma}} \left\lVert \frac{\sqrt{\Tilde{\gamma}}}{2}\ket{\Tilde{\psi}_c} - \frac{\sqrt{\gamma}}{2}\ket{\psi_c} \right\rVert \\
    & \le \frac{\epsilon}{\gamma \sqrt{2}} + \frac{2\epsilon}{\sqrt{\gamma}} \le \frac{3\epsilon}{\gamma}
\end{align*}
where we used Eqs.~(\ref{eq:asymmetric-distance-bound})-(\ref{eq:average-distance-sqrt-bound}) to bound the two terms at the end. The amplitude amplification procedure of Theorem~\ref{thm:non-unitary-amplitude-amplification} has be run with $\sigma = \sqrt{\Tilde{\gamma}}/2 = \Omega(\sqrt{\gamma})$, so $\bigO(\frac{1}{\sqrt{\gamma}} \log \frac{1}{\delta})$ copies of $C$ still suffice.

We proved that we get an error bound of $3\epsilon/\gamma$ with probability $1 - \delta$ using the procedure described in Section~\ref{sec:quantum-state-preparation}, and it requires a total of
$$\bigO\tuples*{\frac{1}{\sqrt{\gamma}} \log \tuples*{\frac{1}{\delta}} \log\tuples*{\frac{1}{\epsilon}}}$$
calls to the oracle for $c(x)$. Since we want error bound $\epsilon$, we simply replace $\epsilon \leftarrow \frac{\epsilon \gamma}{3}$ in the above argument. This change of variable does not alter the asymptotic complexity, as we assumed $\epsilon \le \gamma/4$.
\begin{theorem}[Robust black-box quantum-state preparation]
    Let $c : [2^n] \rightarrow [0,1]$ be a function implemented by a quantum circuit
    \begin{align*}
        \bigO_c \ket{x} \ket{0}^{\otimes m} = \ket{x} \ket{c(x)}
    \end{align*}
    It is possible to construct the normalized state $\ket{\psi_c} = \frac{1}{\sqrt{N\gamma}} \sum_x c(x) \ket{x}$ using $\bigO(\frac{1}{\sqrt{\gamma}} \log \frac{1}{\delta} \log \frac{1}{\epsilon})$ copies of $\bigO_c$ and single- and two-qubit gates.
\end{theorem}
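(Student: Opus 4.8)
The plan is to chain together the constructions of Sections~\ref{sec:unitary-logarithm}--\ref{sec:error-analysis} and propagate the cost and error through each stage. First I would turn $\bigO_c$ into the diagonal unitary $U_c = \diag(e^{i\pi c(x)/2})_x = e^{i\pi H_c}$, $H_c = \diag(c(x)/2)_x$, as in Figure~\ref{fig:phase-oracle-construction}: this uses two calls to $\bigO_c,\bigO_c^\dag$, $m = \bigO(\log\tfrac1\epsilon)$ ancilla qubits holding the bits of $c(x)$, and $m$ controlled phase rotations, and controlled $U_c, U_c^\dag$ cost the same. Since $\lVert H_c \rVert \le \tfrac12$ the logarithm construction of Section~\ref{sec:unitary-logarithm} applies: the circuit of Eq.~(\ref{eq:hamiltonian-sine-construction}) is a $(1,0)$-block-encoding of $\sin(\pi H_c)$ built from one controlled-$U_c$ and one controlled-$U_c^\dag$, and Theorem~\ref{thm:quantum-svt} applied with a degree-$\bigO(\log\tfrac1{\epsilon'})$ polynomial approximating $\tfrac1\pi\arcsin$ (Theorem~\ref{thm:arcsin-approx}, with a \emph{constant} $\delta$ because the spectrum of $\sin(\pi H_c)$ stays bounded away from $\pm1$; rescaling $c$ by a constant, which leaves the normalized target untouched, handles the boundary case $c(x)\to1$) produces a unitary $\Tilde C$ that $(2,\epsilon')$-block-encodes $H_c$ with $\bigO(\log\tfrac1{\epsilon'})$ oracle calls and $\mathrm{poly}(n)$ extra gates, counting the reflection operators of the QSVT and the single LCU needed for the real completion of Theorem~\ref{thm:real-qsp}.

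Next I would run $\Tilde C$ on $\ket{00}\ket{+}^{\otimes n}$. As computed in Section~\ref{sec:quantum-state-preparation}, $\Tilde H_c\ket{+} = \tfrac{\sqrt{\Tilde\gamma}}{2}\ket{\Tilde\psi_c}$ with $\Tilde\gamma = \tfrac1N\sum_x\Tilde c^2(x)$, so $\Tilde C\ket{00}\ket{+} = \tfrac{\sqrt{\Tilde\gamma}}{2}\ket{00}\ket{\Tilde\psi_c} + \ket{\Tilde\Phi}$ with $\braket{00}{\Tilde\Phi}=0$. I then invoke Lemma~\ref{thm:non-unitary-amplitude-amplification} with $S = H^{\otimes n}$ (an $n$-gate circuit) and amplitude $\sigma = \sqrt{\Tilde\gamma}/2$: this uses $\bigO(\tfrac1\sigma\log\tfrac1\delta)$ copies of $\Tilde C$ and $S$ and returns $\ket{\Tilde\psi_c}$ with probability $1-\delta$. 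Provided $\epsilon'\le\gamma/4$ one has $\Tilde\gamma\ge\gamma/2$, hence $\sigma = \bigOmega(\sqrt\gamma)$ and the amplification costs $\bigO(\tfrac1{\sqrt\gamma}\log\tfrac1\delta)$ copies of $\Tilde C$.

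It remains to fix the error budget and count gates. By the analysis of Section~\ref{sec:error-analysis} --- Eqs.~(\ref{eq:asymmetric-distance-bound})--(\ref{eq:average-distance-sqrt-bound}) control the perturbed normalization $\Tilde\gamma$ --- the output satisfies $\lVert\ket{\Tilde\psi_c}-\ket{\psi_c}\rVert\le 3\epsilon'/\gamma$, so setting $\epsilon' = \epsilon\gamma/3$ delivers accuracy $\epsilon$; moreover $\epsilon\le\gamma/4$ forces $\log\tfrac1\gamma\le\log\tfrac1\epsilon$, whence $\log\tfrac1{\epsilon'} = \bigO(\log\tfrac1\epsilon)$, which leaves the asymptotics unchanged and also certifies $\epsilon'\le\gamma/4$ as required above. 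Multiplying the $\bigO(\tfrac1{\sqrt\gamma}\log\tfrac1\delta)$ copies of $\Tilde C$ by the $\bigO(\log\tfrac1\epsilon)$ oracle calls inside each copy yields $\bigO(\tfrac1{\sqrt\gamma}\log\tfrac1\delta\log\tfrac1\epsilon)$ calls to $\bigO_c$; every non-oracle gate acts on $n+\bigO(\log\tfrac1\epsilon)$ qubits and their number per oracle call is $\mathrm{poly}(n)$.

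The one place that needs real work is Lemma~\ref{thm:non-unitary-amplitude-amplification}, and it would be proved as sketched there: with $\Tilde\Pi = \proj{00}\otimes\id$ and $\Pi = \proj{00}\otimes S\proj{0^n}S^\dag$ one has $\Tilde\Pi\,\Tilde C\,\Pi = \sigma\ketbra{w}{\Psi}$, a rank-one block-encoding whose only singular value is $\sigma$, and then Theorem~\ref{thm:quantum-svt} with an odd polynomial $P$ of degree $\bigO(\tfrac1\sigma\log\tfrac1\delta)$ that $\tfrac\delta2$-approximates $\operatorname{sign}$ on $\{|x|\ge\sigma\}$ (from~\cite{lowHamiltonianSimulationUniform2017}) sends the singular value to $|P(\sigma)|\ge1-\tfrac\delta2$, hence the success probability to $\ge(1-\tfrac\delta2)^2\ge1-\delta$. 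Everything else is bookkeeping: confirming that the approximate $\Tilde H_c$ still yields a state close to $\ket{\psi_c}$ despite $\Tilde\gamma\neq\gamma$ (the content of Section~\ref{sec:error-analysis}) and that the unitary-logarithm step stays valid as $c(x)\to1$ (the rescaling noted above). I expect the amplitude-amplification lemma to be the main obstacle, since it is the only genuinely new ingredient; the rest is assembly.
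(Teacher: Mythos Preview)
Your proposal is correct and follows essentially the same route as the paper: build $U_c$ from $\bigO_c$ via the phase-kickback of Figure~\ref{fig:phase-oracle-construction}, block-encode $\sin(\pi H_c)$ and invert with the $\arcsin$ polynomial of Theorem~\ref{thm:arcsin-approx}, apply the result to $\ket{00}\ket{+}$, amplify via Lemma~\ref{thm:non-unitary-amplitude-amplification} (proved exactly as you sketch, with the sign-function polynomial on the rank-one block $\sigma\ketbra{w}{\Psi}$), and finish with the error propagation of Section~\ref{sec:error-analysis} and the substitution $\epsilon'\leftarrow\epsilon\gamma/3$. Your explicit remark about rescaling $c$ to keep $\sin(\pi H_c)$ strictly inside $(-1,1)$ is a detail the paper glosses over (it only notes $\lVert H_c\rVert\le\tfrac12$, which puts $\sin(\pi H_c)$ at the boundary when $c(x)=1$), but otherwise the two arguments are the same.
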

\noindent We highlight here that the error bound by Eq.~(\ref{eq:error-hamiltonian-spectral-norm}) can be used not only to bound the approximation error of the quantum signal processing polynomial for the arcsin. Noise of other nature can arise, for example the imperfection of the gates in Eq.~(\ref{eq:hamiltonian-sine-construction}), or the fact that $c(x)$ can be an arbitrary real number and the oracle only computes a $m$-bit representation of it. However, all these noises will be treated by the analysis of this section with little effort. Importantly, we deal with the case where $c(x)$ has too high (or even infinite) precision, and our oracle has to approximate it with an $m$-bit truncation $c'(x)$. 
In this case, the block encoding we are going to extract is $\Tilde{H}_{c'}$, i.e.\ the tilde on $H$ represents the error given by the polynomial approximation, as in $\Tilde{H}_c$. Thus we can bound the error
\begin{align*}
    \left\lVert \Tilde{H}_{c'} - H_c \right\rVert \le \left\lVert \Tilde{H}_{c'} - H_{c'} \right\rVert + \left\lVert H_{c'} - H_c \right\rVert.
\end{align*}
The first term is bounded by $\epsilon$, which is the accuracy of the polynomial approximation, and the second term only depends on the accuracy of the oracle, which can be also bounded by $\epsilon$ by taking $m = \lceil\log_2(1/\epsilon)\rceil$. Thus the distance between the extracted Hamiltonian $\Tilde{H}_{\Tilde{c}}$ and the ideal one $H_c$ is bounded by $2\epsilon$. We can replace $\epsilon \leftarrow \epsilon/2$ without changing the complexity and obtain the bound of Eq.~(\ref{eq:error-hamiltonian-spectral-norm}), from which the rest of the analysis is identical. Thus, at the end of the analysis, the number of additional ancilla qubits will be $2 + \lceil \log_2(6/\epsilon \gamma) \rceil$ (the two qubits come from the QSVT ansätze, while the factor $6/\epsilon\gamma$ is due to the two replacements of $\epsilon$ we have done throughout our analysis).
\section{Comparison with other approaches}
\noindent Here we show a comparison with other quantum algorithms solving black-box quantum-state preparation in terms of depth, query and qubit complexity. Grover's approach in~\cite{groverSynthesisQuantumSuperpositions2000} works as follows: we do the following operations
\begin{align*}
    \ket{k} \ket{0} \ket{0}_S \ket{0}_T & \stackrel{\bigO_c}{\rightarrow} \ket{k} \ket{c(k)} \ket{0}_S \ket{0}_T \\
    & \rightarrow \ket{k} \ket{c(k)} \ket{\theta_k}_S \ket{0}_T \\
    & \rightarrow \ket{k} \ket{c(k)} \ket{\theta_k}_S (\sin \theta_k \ket{0}_T + \cos \theta_k \ket{1}_T) \\
    & \simeq c(k) \ket{k} \ket{c(k)} \ket{\theta_k}_S \ket{0}_T + \ket{junk}
\end{align*}
where $T$ is an additional qubit used for amplitude transduction, $S$ is register containing sufficiently many qubits where the approximation $\theta_k$ of the arcsine is saved. After uncomputing the two middle registers this achieves the desired state with $\bigO(1/\sqrt{\gamma} \log(1/\epsilon) \log(1/\delta))$ queries and $n + 2\lceil\log(1/\gamma\epsilon)\rceil + \bigO(1)$ qubits after an amplitude amplification over $\ket{0}_T$. The problem with this method is the approximation of the arcsine, which has been estimated to take over 11000 Toffoli gates~\cite{häner2018optimizing}. Sanders et al.~\cite{sandersBlackBoxQuantumState2019} replace the arcsin implementation with a simple comparison procedure:
\begin{align*}
    C \ket{x} \ket{y} \ket{0}_T =
    \begin{cases}
        \ket{x} \ket{y} \ket{0}_T & x < y \\
        \ket{x} \ket{y} \ket{1}_T & x \ge y
    \end{cases}
\end{align*}

\noindent By considering $c(k)$ to be a $m$-bit number (as opposed to an $m$-bit expansion of a number $\in [0, 1]$ in our analysis), we obtain:
\begin{align*}
    \ket{k} \ket{0} \ket{0}_S \ket{0}_T & \rightarrow \ket{k} \ket{c(k)} \ket{0}_S \ket{0}_T \\ &
    \stackrel{H^{\otimes m}_S}\rightarrow \ket{k} \ket{c(k)} \left(\sum_{x = 0}^{2^m - 1} \ket{x}_S\right) \ket{0}_T \\
    & \stackrel{C}\rightarrow \ket{k} \ket{c(k)} \left(\sum_{x = 0}^{c(k) - 1} \ket{x}_S \ket{0}_T + \sum_{x = c(k)}^{2^m - 1} \ket{x}_S \ket{1}_T \right) \\
    & \stackrel{H^{\otimes m}_S}\rightarrow \frac{c(k)}{2^m} \ket{k}\ket{c(k)}\ket{0}_S\ket{0}_T + \ket{junk}
\end{align*}
Thus we obtain the same effect after uncomputing $c(k)$ and applying amplitude amplification over $\ket{0}_S \ket{0}_T$, requiring the same asymptotic query and qubit complexity, but using significantly less gates at each round of amplitude amplification.

Our approach is more similar to Grover's, as we also work by implementing the arcsine. However, the construction of Eq.~(\ref{eq:hamiltonian-sine-construction}) essentially transduces the amplitudes $\sin c(x)$ directly, and the arcsine is applied afterwards, in the form of a QSVT polynomial, rather than a coherent implementation. While the gate and query complexities match the previous approaches, we need $n + \lceil \log(1/\gamma\epsilon) \rceil + \bigO(1)$ qubits. Notice that the $\gamma$ factor in the logarithm, both present in our approach and Grover's, is needed to compensate the numerical error of the arcsine approximation, which is amplified by amplitude amplification. This is not a problem in the algorithm in~\cite{sandersBlackBoxQuantumState2019}, as their method does not introduce any sort of error.

\begin{figure}
    \centering
    \setlength{\tabcolsep}{10pt}
    \renewcommand{\arraystretch}{2}
    \begin{tabular}{r c c c c c} 
        \hline
        & Queries to $\bigO_c$ & Qubits & CA \\ 
        \hline
        Grover~\cite{groverSynthesisQuantumSuperpositions2000} & $\bigO(\frac{1}{\sqrt{\gamma}} \log \epsilon \log \delta)$ & $n + 2 \log \frac{1}{\gamma\epsilon} + \bigO(1)$ & Yes \\
        Sanders et al.~\cite{sandersBlackBoxQuantumState2019} & $\bigO(\frac{1}{\sqrt{\gamma}} \log \epsilon \log \delta)$ & $n + 2 \log \frac{1}{\epsilon} + \bigO(1)$ & No \\
        This work & $\bigO(\frac{1}{\sqrt{\gamma}} \log \epsilon \log \delta)$ & $n + \log \frac{1}{\gamma\epsilon} + \bigO(1)$ & No \\
        \hline
        
\end{tabular}
    \caption{Comparison of qubit complexity black-box quantum-state preparation algorithms. CA = Coherent arithmetic.}
    \label{fig:comparison-table}
\end{figure}
\section{Discussion}
\noindent In this paper we devised a new algorithm for quantum state preparation, where the amplitudes are given as quantum oracles $c(x)$, like in Grover's search~\cite{groverFastQuantumMechanical1996a, groverQuantumMechanicsHelps1997}. Speaking of Grover's search, one can see that the unstructured search problem can be seen as a special case of black-box quantum state preparation, where $c(x)$ is $1$ for exactly one value $x_0$, and $0$ for the others. In this case $\gamma = N$ and our algorithm takes $\bigO(\sqrt{N} \log \frac{1}{\delta} \log \frac{1}{\epsilon})$ queries to prepare $\ket{x_0}$ as in Grover's algorithm ($\delta$ and $\epsilon$ can be unified into a single probability of failure upon measurement of the state). The dependence on $\gamma$ tells us that the construction of $\ket{\psi_c}$ is tightly related to how `uniform' the distribution is: if we want to construct something close to the $\ket{+}^{\otimes n}$ state, the oracle $c(x)$ will have values close to $1$, and $\gamma \simeq 1$. On the other hand, if we have a few spikes in our state (the limit case is the one of unstructured search above, with only one spike), then the average $\gamma \ll 1$, and this will impact significantly on the complexity. Notice that $\gamma$ need not be smaller than $1/N$: if $\max c(x) < 1$, then we can divide every value of $c(x)$ by this number and work with $\max c(x) = 1$.

It is important to remark, however, that we bound the query complexity for the oracle $c$ in our analysis, but it can be a challenging problem to construct a polynomial-time quantum circuit for $c(x)$ if the amplitudes $c_1, \ldots, c_N$ are truly random. Perhaps quantum algorithmic information theory can be of help in understanding how much we can `compress' a given list of amplitudes into a polynomial-time algorithm~\cite{vitanyiQuantumKolmogorovComplexity2001}. For the easier-to-handle case where the state to prepare has a nice analytical expression, however, the time complexity of the construction is only given by the normalization factor $\gamma$. We highlight here again the fact that the above algorithm only constructs states with positive real amplitudes, but complementing this algorithm with a second oracle $\phi(x)$ constructed as in Figure~\ref{fig:phase-oracle-construction} allows one to obtain amplitudes of the form $c(x) e^{i\pi \phi(x)}$.

\section*{Acknowledgements}
\noindent I would like to thank William Schober, Stefan Wolf and Charles B\'edard for insightful feedback and discussions. This work was supported by the Swiss National Science Foundation (SNF), grant No. \texttt{200020\_182452}.

\bibliography{refs}

\end{document}